\newcommand{\commento}[1]{}
\newcommand*{\myDef}{\mathrel{\vcenter{\baselineskip0.5ex \lineskiplimit0pt
                     \hbox{\scriptsize.}\hbox{\scriptsize.}}}%
                     \mathrel{\vcenter{\baselineskip0.5ex \lineskiplimit0pt
                     \hbox{\scriptsize.}\hbox{\scriptsize.}}}%
                      =
                     }
\newcommand{\powerset}{\raisebox{.15\baselineskip}{\Large\ensuremath{\wp}}}
  \DeclareRobustCommand*\pct{\scalebox{.9}{\%}}
\newcommand{\PRB}[1]{{#1}\!}
\newcommand{\ket}[1]{|#1\rangle}
\DeclareSymbolFont{symbols2}{LS1}{stixfrak}{m}{n}
\newcommand{\typecolon}{\mathrel{\!\!:\!:\!\!}}
\newcommand{\wires}{\mathop{\mathsf{wr}}}
\newcommand\dom{\textrm{dom}}
\newcommand\codom{\textrm{ran}}
\newcommand{\CIRC}{\textsf{c\i rc}}
\newcommand{\Nat}{\textsf{Nat}}
\newcommand{\qCom}{\cmd}
\newcommand{\cmd}{\textsf{cmd}}
\newcommand{\qnew}{\mathrel{\textsf{qnew}}}
\newcommand{\cnew}{\mathrel{\textsf{cnew}}}
\newcommand{\QNew}[3]{\qnew^{#2} {#1}\qIn {#3}}
\newcommand{\CNew}[3]{\cnew^{#2} {#1} \qIn {#3}}
\newcommand{\qVar}{\textsf{qVar}}
\newcommand{\cVar}{\textsf{cVar}}
\newcommand{\while}{\mathop{\tt while}}
\newcommand{\wdo}{\mathrel{\tt do}}
\newcommand{\qIn}{\mathrel{\tt in}}
\newcommand{\vettore}[1]{\overline{\mathbb{#1}}}
\newcommand{\revCirc}{\mathop{\mathtt{reverse}}}
\newcommand{\cRead}{\mathop{\textsf{read}}}
\newcommand{\qMeas}{\mathop{\textsf{meas}}}
\newcommand{\partMeas}{\mathop{\textrm{pMeas}}}
\newcommand{\cEval}{\textrm{cEval}}
\newcommand{\qpcf}{\text{\textsf{qPCF}}}
\newcommand{\QuIA}{\textsf{IQu}}
\newcommand{\qdcc}{\emph{quantum data \& classical control~}}
\def\TTY{{\tt Y}}
\newcommand{\condinc}[2]{\ifthenelse{\equal{\typeof}{0}}{#1}{#2}}
\newcommand{\restr}[1]{{\mathop{\upharpoonright_{\! #1}}}}
\def\evaluates{\ev}
\def\fleche{\rightarrow}
\newcommand{\ev}{\mathrel{\pmb\Downarrow}}
\newcommand{\s}{{\bf s}}
\newcommand{\p}{{\bf p}}
\newcommand{\ifz}{\text{\tt if}}
\newcommand{\lif}[3]{\ifz\,\, {\tt #1 }\,\, {\tt #2} \,\, {\tt #3}\,}
\newcommand{\PCF}{\text{\textsf{PCF}}}
\newcommand{\FV}{\mathrm{FV}}
\newcommand\num[1]{{\tt \underline{#1}}}
\newcommand{\n}{\num{n}}
\newcommand\equal[2]{{\tt eq}_{#1}({\tt #2})}
\def\Pred{{\tt pred}}
\def\Succ{{\tt succ}}
\newcommand{\none}{\texttt{skip}}
\newcommand{\rsize}{\mathop{r\texttt{size}}}
\newcommand{\csize}{\mathop{c\texttt{size}}}
\newcommand{\mpar}{\mathsf{M}_{par}}
\newenvironment{myArray}[1][1]{%
  \array%
}{%
  \endarray
}
\newcommand*\leftPart[2]{ \mathrel{{#1}\hspace{-.8mm}\Lsh_{#2} }}
\newcommand*\rightPart[2]{ \mathrel{{#1}\Rsh_{\hspace{-.8mm}#2} }}
\newtheorem{definition}{Definition}
\newtheorem{example}{Example}
\newtheorem{proposition}{Proposition}
\newtheorem{theorem}{Theorem}
\title{Quantum Programming Made Easy}
\author{Luca Paolini 
\institute{Dipartimento di Informatica\\ Universit\`a di Torino\\ Italy}
\email{luca.paolini@unito.it}
\and
Luca Roversi 
\institute{Dipartimento di Informatica\\ Universit\`a di Torino\\ Italy}
\email{luca.roversi@unito.it}
\and
   Margherita Zorzi
\institute{Dipartimento di Informatica\\ Universit\`a di Verona\\ Italy}
\email{margherita.zorzi@univr.it}
}
\begin{document}




 


\maketitle

\begin{abstract}
  We present $ \QuIA $, namely a quantum programming language that extends Reynold's 
  Idealized Algol, the paradigmatic core of Algol-like languages.
  $ \QuIA $ combines imperative programming with high-order features, mediated by a simple type theory.
    $ \QuIA $ mildly merges 
  its quantum features with the classical programming style that we can 
  experiment through Idealized Algol, the aim being to ease a transition towards the quantum  programming world.
  The proposed extension is done along two main directions.
  First, $ \QuIA $ makes the access to quantum 
  co-processors by means of quantum stores. 
  Second, $ \QuIA $  includes some support for the  direct manipulation 
  of quantum circuits, in accordance with recent trends in the development of  quantum programming languages.
 Finally, we show that $ \QuIA $ is quite effective in expressing well-known quantum  algorithms.
\end{abstract}

 
\section{Introduction}\label{sec:intro}

Linearity is an essential ingredient for quantum computing, since quantum data 
have to undergo restrictions such as non-cloning and non-erasing properties.
This is evident from the care that quantum programming language design puts on 
the management of quantum bits, especially in presence of higher-order features.

Selinger's QPL \cite{selinger2004mscs} is a milestone for quantum  programming 
theory. It follows the mainstream  approach ``\qdcc'' based on the 
architecture QRAM~\cite{Knill96}.
In QPL a classical program generates ``directives'' for  an ideal quantum 
device. 
 QPL is the first statically typed programming language that enforces the well-formedness of a program at compile time,
by avoiding run-time checks. The non-duplication of quantum data is enforced by 
the syntax of the language.
The lack of higher-order functions is the main limitation of QPL.

The introduction of QPL opened the way to the design of several quantum 
programming languages.
Some of them came be found  in~\cite{Grattage11entcs,diaz11,pagani14acm,Zorzi16,selinger2006mscs,selinger09chap,LagoZ15}.
Pagani et al.~\cite{pagani14acm} and Zorzi~\cite{Zorzi16} mainly focus
on the computational models behind the language. Other papers, 
like~\cite{selinger2006mscs}, focus on pioneering prototypes of 
effective quantum programming languages. 
The languages in \cite{selinger2006mscs,pagani14acm,Zorzi16,Felty16} deal 
with higher-order functions and 
follow the direction suggested by Selinger in~\cite{selinger2004mscs}:
Linear Logic 
exponential modalities are used to devise typing systems which 
accomodate quantum data management in a classical programming setting.

A very pragmatic proposal is  \texttt{Quipper}~\cite{quipperacm}.
Its distinguishing feature is its management of quantum circuits  as classical data.
It allows to duplicate, erase, operate and meta-operate circuits.
The prototypical example of meta-operations are identified  in~\cite{Knill96}:
reversing quantum circuits, conditioning of quantum operations and converting classical algorithms in reversible ones.
Some formalizations of the core of \texttt{Quipper}, called \texttt{ProtoQuipper} and \texttt{ProtoQuipper-M}, have been defined in   \cite{rios2017eptcs,Ross2015} and are base based on Linear-Logic typing systems.

Moving the focus from the quantum-data perspective to the classical-control one,
we find recent  quantum programming languages as $\qpcf$~\cite{paolini2017lncs} and \texttt{QWire}~\cite{qwire}.
\texttt{QWire} has been conceived as an extension of a classical language
that provides an elegant and manageable language for defining quantum  
circuits; Haskell and Coq have been considered as possible hosts
(in fact, \texttt{QWire} is a ``quantum plugin'' for a host classical language).
It provides a suitable support for quantum circuits (extended with measurement gates) through a boxing interface that rests
on a Linear-Logic based typing system.
If some normalization assumption holds, then the interface keeps the typing rules for elements that live in the quantum 
world apart from the typing system of the hosting language.
In contrast, $\qpcf$~\cite{paolini2017lncs} is a stand-alone quantum programming language. It extends $\PCF$ with a type for quantum circuits
and its operational semantics is supplied by means of a QRAM-compliant device with suitably relaxed de-coherence assumptions.
$ \qpcf $ has two main distinctive type features: it uses dependent types and, it avoids types for quantum states,
by relegating their relevance to quantum co-processor calls.
More precisely, $\qpcf$ prevents the access to intermediate quantum states
by tightening up the interaction process with the black-box quantum device: whole bunches of quantum directives are supplied to the device
that ends their evaluation with a measurement of the whole state.
For the sake of completeness, we recall that dependent type extensions of \texttt{QWire} are considered in \cite{qwire,qwire2}.


\medskip

In this paper, we introduce $\QuIA$ (read ``\emph{Haiku''} as the japanese poetic form) a new 
quantum programming language that extends Idealized Algol by conservatively  inheriting its positive qualities.
Reynolds's Idealized Algol is a paradigmatic language that elegantly combines the fundamental features of procedural languages,
i.e. local stores and  assignments, with a fully-fledged higher-order procedure mechanism which, in its turn, conservatively includes $\PCF$.
Idealized Algol's expressiveness and simple type  theory have attracted the attention of many researchers \cite{book1997ia}.

$\QuIA$ extends  Idealized Algol by introducing two new types.
The first one types quantum circuits, the second one types quantum variables.
Considered that quantum circuits are classical data, their type allows 
to operate on them without any special care, as they were a kind of numerals (roughly, they are special strings).
Instead, manipulating quantum states requires much care, for which we adapt Idealized Algol's original de-reference mechanism to access the 
content of classical variables. In Idealized Algol, a classical variable is a name for a classical value stored in a register.
We cannot duplicate that register, but we can access its content via suitable methods and, if interested to that, we can duplicate that content.
Following this approach, we introduce the type of ``quantum variables'' that prevent to read quantum states, but allow to measure them.
The unique irreversible update of quantum stare allowed by $\QuIA$  js the measurement, while all other transformations are required to be unitary ones.

Essentially, $\QuIA$ can be seen as a higher-order extension of QPL~\cite{selinger2004mscs} where we prevent the quantum state duplication
not by means of a Linear Logic-based typing system, but putting them inside suitable variables, whose content cannot be  duplicated.
However, the duplication can be applied to the name of the variables to safely refer to their content in different spots of a program.
$\QuIA$ manipulates the quantum variables as much as possible as imperative variables. 

According to the recent trend in the development of quantum programming 
languages, we introduce some facilities to duplicate and operate on quantum circuits.
This approach stems from how quantum algorithms are usually described and built, from the 
far-sighted informal ideas proposed in~\cite{Knill96} and, constitutes one of the distinctive features of \texttt{Quipper}, \texttt{QWire} and $\qpcf$.
However, $\QuIA$ is different from \texttt{ProtoQuipper}-like approaches and \texttt{QWire}, 
since the linear management of quantum state does not rest on types based on Linear Logic.

In $\QuIA$ we do not use dependent types for describing quantum circuits.
This is to keep it as simple as possible. 
Therefore, formally, $\QuIA$ does not extend $\qpcf$. However, such an extension is possible, thus we state that $\QuIA$ 
extends $\qpcf$ with classical and quantum stores.

Concluding,  $\QuIA$ is an original 
stand-alone higher-order type-safe quantum programming language.
The philosophy behind its design is to keep programming simple and rooted in the traditional classical programming approach
in accordance with the Idealized Algol design. The proposed extension
should help the transition from the classical programming to the quantum one.
Moreover the language smoothly adapts to the common descriptions of quantum algorithms coherently with what Knill advocates in~\cite{Knill96}.\\
\noindent
\textbf{Outline.} Section~\ref{sec:QIA} introduces the row syntax of $\QuIA$ 
and its typing system with some basic properties.
Section~\ref{sec:eval} provides the details about its operational evaluation 
and its type safety.
Section~\ref{sec:example} concretely uses $\QuIA$ to implement some 
well-known algorithms. The last section is about conclusions and future work.

 \vspace{-3ex}
\section{$\QuIA$: Idealized QUantum language}
\label{sec:QIA}

$\QuIA$ is a prototypical and minimal typed language that combines quantum  commands and states with higher-order functional features by using registers.
It is an extension Idealized Algol (see \cite{book1997ia,ong2004apal}), namely a $\PCF$ that includes assignments and side-effects.

The grammar of $ \QuIA $  \emph{ground types} is  $\beta ::=  \Nat \mid \cVar \mid \qVar \mid \cmd \mid  \CIRC$.
\begin{itemize}
\item $\Nat$ is the type of numerical expressions which evaluate to natural  numbers;
\item $\cVar$ is the type of imperative variables that store natural numbers;
\item $\qVar$ is the type of quantum registers that  store  quantum states, in accordance with the QRAM model
  (so states are assumed free of decoherence issues);
\item $\CIRC$ is the type of quantum-circuit expressions, i.e. expressions 
evaluating to strings that describe unitary transformations;
\item $\qCom$ is the type of commands, i.e. the type of  operations on variables (producing side-effects).
\end{itemize}
\noindent 
The \emph{types} of $ \QuIA $ are the language that
$\sigma, \tau,\theta ::= \beta \mid \theta\rightarrow \theta$ generates.
\\
The \emph{terms} of $\QuIA$ belong to the row syntax:
$$
\begin{array}{rrl}
{\tt M,N,P,Q} &\myDef&  \texttt{x} \mid \num{n} \mid \Pred\mid \Succ   \mid \ifz \mid  {\tt \lambda x.M} \mid {\tt MN} \mid \TTY_\sigma\\ 
              & \mid &\none \mid {\tt M;N} \mid \while \mathtt{P} \wdo \mathtt{Q} \mid \mathtt{M}:= \mathtt{N} \mid  \cRead \mathtt{M}\mid   \CNew{\mathtt{x}}{\mathtt{N}}{\mathtt{M}} \\
              & \mid & \mathtt{U}^\num{k} \mid  \quad\typecolon \quad 
              \mid\quad\parallel \quad\mid \revCirc \mid \csize \mid \rsize 
              \mid  \mathtt{M}\vartriangleleft {\tt N}\mid\qMeas^\mathtt{N}  
              \mathtt{M} \mid   \QNew{\mathtt{x}}{\mathtt{N}}{\mathtt{M}} 
              \enspace .
\end{array} 
$$
\begin{itemize}
\item  The first line includes in $\QuIA$,   a boolean-free call-by-name $\PCF$,
  namely the sub-language of $\QuIA$ which contains variables, numerals, predecessor, successor, conditional, abstraction, application and recursion;
\item The second line includes in $\QuIA$, the imperative part of Idealized Algol 
\cite{ong2004apal,pitts1997chapter} which contains the commands do-nothing, 
composition, iteration, assignment, store reader ($\cRead$) and store binder 
($\cnew$);
\item  The third  line includes in $\QuIA$, the syntax of quantum circuits,
  basic operations on both quantum circuits and quantum registers.\\
  \textbf{Quantum Circuits.} We expect that evaluating a circuit expression yields 
  (evaluated) circuits which are strings generated by the grammar $\mathtt{C} \myDef 
  \mathtt{U}^\num{k} \mid \mathtt{C} \typecolon \mathtt{C} \mid   \mathtt{C}\parallel  \mathtt{C} $  and which include, 
  quantum gates (${\tt U}^\num{k}$ denotes a gate in the set $\mathcal{U} (k)$ of gates operating on $k$ wires), sequential and parallel compositions.
  A non null arity, i.e. the number of its inputs and of its output, labels 
  every gate. For every gate symbol, the quantum co-processor~\cite{NieCh10}
  implements a unitary transformation to interpret it.
  W.l.o.g., we assume that a universal base of quantum transformations is  available.
 The design of circuit's syntax is aimed to be basic and to predispose the use of dependent types in $\QuIA$.\\
 \textbf{Quantum Circuit Operations.} We limit our interest to prototypical quantum circuits operations.
    $\revCirc$ is the standard quantum meta-operation for reversing (see~\cite{Knill96,qwire}).
    $\csize$ returns the arity of a circuit.\\
  \textbf{Quantum Operations.} $\rsize$ returns the arity of a quantum register.
    $\mathtt{x}\vartriangleleft {\tt N}$ evaluates the application of the circuit ${\tt N}$ to the quantum state stored in $\mathtt{x}$,
    then it stores the resulting state in $\mathtt{x}$.
    Roughly, $\qMeas^\mathtt{N}  \mathtt{x}$ measures ${\tt N}$ qubits of a quantum  state which $\mathtt{x}$ stores (and, update such state, in accordance with the quantum measurement rules).
\end{itemize}

\subsection{Typing system}

\begin{table}[t]
\scalebox{.93}{$\begin{myArray}[1.8]{c}
     \infer[\scriptstyle(tv)]{   B \cup  \{\mathtt{x}:\sigma\} \vdash {\tt x}:\sigma }{}  \quad \infer[\scriptstyle(tn)]{B\vdash \num{n}:\Nat}{}   \quad
\infer[\scriptstyle(ts)]{B\vdash {\tt succ}:\Nat\fleche \Nat}{} 
\qquad \infer[\scriptstyle(tp)]{B\vdash {\tt pred}:\Nat\fleche \Nat}{} 
\\[3mm]
\infer[\scriptstyle(tab)]{B\vdash {\tt \lambda x^\sigma.N}:\sigma\fleche\tau }{B\cup  \{{\tt x}:\sigma\}\vdash {\tt N}:\tau }
       \qquad \infer[\scriptstyle(tap)]{B\vdash {\tt PQ}:\tau}{B\vdash {\tt P}:\sigma\fleche\tau\quad B\vdash {\tt Q}:\sigma }
     \qquad \infer[\scriptstyle(tY)]{B\vdash \TTY_\sigma:(\sigma \fleche \sigma) \fleche \sigma} {}\\[3mm]
  \infer[\scriptstyle(ti)]{B\vdash \ifz:\Nat\fleche \beta\fleche\beta\fleche\beta}{} 
 \quad \infer[\scriptstyle(tk)]{B\vdash \none:\qCom}{}  
 \quad\infer[\scriptstyle(tc)]{B\vdash {\tt P;Q}:\beta}{B\vdash {\tt P}:\qCom \quad B\vdash {\tt Q}:\beta} 
 \\[3mm]
  \infer[\scriptstyle(tw)]{B\vdash  \while \mathtt{P} \wdo \mathtt{Q}:\qCom}{B\vdash {\tt P}:\Nat \quad B\vdash {\tt Q}:\qCom}
  \quad \infer[\scriptstyle(tA)]{B \vdash \mathtt{M} := {\tt N}:\qCom }{B \vdash {\tt M}:\cVar \quad B \vdash {\tt N}:\Nat}
  \quad \infer[\scriptstyle(tR)]{B\vdash \cRead \mathtt{M}  :\Nat}{B\vdash {\tt M}:\cVar }
\\[3mm]
\infer[\scriptstyle(tcnw)]{B\vdash  \CNew{\mathtt{x}}{\mathtt{N}}{\mathtt{M}} :\beta }{
  B\vdash \mathtt{N}:\Nat & B\cup  \{{\tt x}:\cVar\}\vdash {\tt M}:\beta} 
\qquad  \infer[\scriptstyle(tc1)]{B\vdash {\tt U}^\num{k}:\CIRC}{ {\tt U}\in\mathcal{U} (k)} 
\\[1mm]
     \infer[\scriptstyle(tc2)]{B\vdash \typecolon:\CIRC\fleche\CIRC\fleche\CIRC }{  }
     \qquad \infer[\scriptstyle(tc3)]{B\vdash \parallel :\CIRC\fleche\CIRC\fleche\CIRC }{  }
     \qquad  \infer[\scriptstyle(tmc)]{B\vdash \revCirc:\CIRC\fleche\CIRC }{ }  
 \\[1.8mm]
 \infer[\scriptstyle(tsc)]{B\vdash \csize (\mathtt{M}) :\Nat }{ B\vdash \mathtt{M}:\CIRC }
 \qquad \infer[\scriptstyle(tsr)]{B\vdash \rsize (\mathtt{M}) :\Nat }{ B\vdash \mathtt{M}:\qVar }
 \qquad  \infer[\scriptstyle(tC)]{B \vdash \mathtt{M} \vartriangleleft{\tt N}:\qCom }{B \vdash {\tt M}:\qVar \quad B \vdash {\tt N}:\CIRC}
\\[3mm]
  \infer[\scriptstyle(tM)]{B\vdash \qMeas^{\tt N} {\tt M} :\Nat}{B\vdash {\tt M}:\qVar \quad B\vdash {\tt N}:\Nat }
\qquad \infer[\scriptstyle(tqnw)]{B\vdash  \QNew{\mathtt{x}}{\mathtt{N}}{\mathtt{M}} :\beta }{
  B\vdash \mathtt{N}:\Nat & B\cup  \{{\tt x}:\qVar\}\vdash {\tt M}:\beta }  
\end{myArray}$}
\rule{\textwidth}{1pt}
\caption{Typing Rules.}
\label{TypingRules}
\end{table}

A \emph{base} is a finite list $\mathtt{x}_1:\sigma_1,\ldots,\mathtt{x}_n:\sigma_n$ that we manage as a set
such that $ \mathtt{x}_i \neq \mathtt{x}_j $ for every $ i\neq j $.  
If $B=\mathtt{x}_1:\sigma_1,\ldots,\mathtt{x}_n:\sigma_n$, 
then $\dom(B)=\{\mathtt{x}_1,\ldots,\mathtt{x}_n\}$ and 
$\codom(B)=\{\sigma_1,\ldots,\sigma_n\}$.
The extension of $ B $ with ${\tt x}:\sigma$ is denoted $B\cup\{{\tt x}:\sigma\}$ where,
w.l.o.g., we assume such that $\mathtt{x}\not\in \dom(B)$.

\begin{definition}\label{defTypeRules}
  A term  is \emph{well-typed} whenever it is the conclusion 
    of a finite derivation built with the rules in Table~\ref{TypingRules}.
\end{definition}

The rules $(tv)$, $(tn)$, $(ts)$, $(tp)$,  $(tab)$, $(tap)$, $(tm)$ and $(ti)$ come, quite directly, from $\PCF$.
We just remark that $(ti)$ extends the conditional in order to operate not only 
on numerals, but also on other ground types (circuits, commands, classical and quantum variables).

The rules $(tk)$, $(tc)$, $(tw)$, $(tA)$, $(tR)$ and $(tcnw)$ are the typical 
imperative extensions that Idealized Algol contains.
The rules $(tk)$ and $(tw)$ types the standard imperative commands exactly as in Idealized Algol.
The rule $(tc)$ serves to consistently \emph{concatenate} commands and to 
\texttt{attach} commands to expressions typed with other ground types.
It is worth to remark that, this second use allows to include side-effects in  the evaluation of expressions.
The rule $(tA)$ gives a type to the assignment of a classical variable. 
The rule $(tR)$ gives a type to the result of \emph{reading} a classical variable.
As in Idealized Algol, the rule $(tcnw)$ allows to declare local variables.
If $\mathtt{x}:\cVar$, then $\CNew{\mathtt{x}}{\mathtt{N}}{\mathtt{M}} $ makes 
a new instance of a classical register available. It is a binder that binds the variable name 
$\mathtt{x}$ and whose scope is $\mathtt{M}$. 
The initial value of $ \mathtt{x} $ depends on $\mathtt{N}$.

The rules  $(tc1)$, $(tc2)$, $(tc3)$ give a type to basic gates and to compositions of quantum circuits.
The rule $(tmc)$ gives a type to a meta-operations on quantum circuits.
The rules $(tsc)$ and $(tsv)$ give a type to the operations that return arity values.
The rule $(tC)$ gives a type to the application of a  quantum circuit to a quantum state.
The rule $(tM)$ gives a type to the measurement of a quantum state stored in a quantum variable.
In accordance with $(tcnw)$, the rule $(tqnw)$ allows to declare local  variables.
If $\mathtt{x}:\qVar$, then $\QNew{\mathtt{x}}{\mathtt{N}}{\mathtt{M}} $ gives a new instance of quantum store.
The variable name $\mathtt{x}$ is a binder whose scope is $\mathtt{M}$.
Moreover, w.l.o.g., our assumptions are that the initial value of the quantum 
variable always is the classical state zero and that $\mathtt{N}$ provides the 
numbers of qubits of the associated register.


The type  system enjoys some basic properties.
First, if $B\vdash {\tt M}:\tau$, then  $\FV({\tt M}) \subseteq dom(B)$.
Second, if $B\vdash {\tt M}:\tau$, then $B'\vdash {\tt M}:\tau$ where $B'$ is 
the restriction of $B$ to $FV( {\tt M})$;
Third, if $B\vdash {\tt M}:\tau$ and $ dom(B)\cap dom(B')=\emptyset$,
then $B\cup B'\vdash  {\tt M}:\tau$, which is the weakening of the base.
Generation lemmas hold too, however, our focus will mainly be on the dynamics
of the typing system than on its logical properties.

\begin{proposition}[Substitution]\label{substLemma}
Let $B, \mathtt{x}: \sigma \vdash {\tt M}:\tau$.
\\
If  $B' \vdash {\tt N}:\sigma$ and $ dom(B)\cap dom(B')=\emptyset$, then  
$B\cup B' \vdash {\tt M}[{\tt N}/\mathtt{x}]:\tau$.
\end{proposition}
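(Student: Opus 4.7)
The plan is to prove the statement by induction on the derivation of $B, \mathtt{x}:\sigma \vdash {\tt M}:\tau$, exploiting the three structural properties just listed (free-variable containment, restriction, and weakening). Before starting, I would fix the usual convention that bound variables in $\mathtt{M}$ are renamed so as to avoid clashes with $\dom(B')\cup\{\mathtt{x}\}\cup\FV({\tt N})$, so that capture-avoiding substitution $\mathtt{M}[{\tt N}/\mathtt{x}]$ behaves as plain textual replacement on all rules.

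The base cases split according to the shape of the axiom used at the root. For rule $(tv)$, either the variable is $\mathtt{x}$ itself, in which case $\mathtt{M}[{\tt N}/\mathtt{x}]={\tt N}$ and $\tau=\sigma$; the result is obtained from $B'\vdash{\tt N}:\sigma$ by weakening with $B$, using $\dom(B)\cap\dom(B')=\emptyset$. Otherwise the variable is some $\mathtt{y}\neq\mathtt{x}$ with $\mathtt{y}:\tau\in B$; then substitution leaves the term unchanged and we reapply $(tv)$ in the base $B\cup B'$. For all the nullary axioms $(tn),(ts),(tp),(tY),(ti),(tk),(tc1),(tc2),(tc3),(tmc)$ the term contains no occurrence of $\mathtt{x}$, substitution is the identity, and the same axiom retypes the term under $B\cup B'$.

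The inductive step treats each remaining rule uniformly: distribute the substitution through the immediate subterms using the definition of capture-avoiding substitution, apply the induction hypothesis to each premise to obtain the substituted subterm typed in $B\cup B'$, and reapply the same typing rule. This covers $(tap)$, $(tc)$, $(tw)$, $(tA)$, $(tR)$, $(tsc)$, $(tsr)$, $(tC)$ and $(tM)$ without difficulty, because none of these rules introduces a new binder.

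The delicate cases are the binders $(tab)$, $(tcnw)$ and $(tqnw)$. Here I would argue as follows for $(tab)$; the other two are analogous. Suppose the derivation ends with $B,\mathtt{x}:\sigma\vdash \lambda\mathtt{y}^{\theta}.\mathtt{P}:\theta\fleche\tau'$ from $B,\mathtt{x}:\sigma,\mathtt{y}:\theta\vdash\mathtt{P}:\tau'$. By the renaming convention $\mathtt{y}\notin\dom(B')\cup\FV({\tt N})\cup\{\mathtt{x}\}$, so the induction hypothesis applied with base $(B\cup\{\mathtt{y}:\theta\},\mathtt{x}:\sigma)$ and side base $B'$ (the disjointness hypothesis is preserved because $\mathtt{y}\notin\dom(B')$) yields $B\cup\{\mathtt{y}:\theta\}\cup B'\vdash \mathtt{P}[{\tt N}/\mathtt{x}]:\tau'$. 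Re-applying $(tab)$ gives exactly $B\cup B'\vdash (\lambda\mathtt{y}^{\theta}.\mathtt{P})[{\tt N}/\mathtt{x}]:\theta\fleche\tau'$. The same reasoning handles $\CNew{\mathtt{y}}{\mathtt{M}_1}{\mathtt{M}_2}$ and $\QNew{\mathtt{y}}{\mathtt{M}_1}{\mathtt{M}_2}$, with the additional first premise $B,\mathtt{x}:\sigma\vdash\mathtt{M}_1:\Nat$ handled by a direct recursive call. The main obstacle, as usual, is the bookkeeping in these binder cases: one must be careful that the renaming convention is compatible with the disjointness hypothesis and that the induction hypothesis is invoked on a base of the correct shape. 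Once the convention is fixed at the outset, this becomes routine.
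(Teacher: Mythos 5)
Your proof is correct and follows exactly the route the paper takes: the paper's proof is the one-line ``Standard, by reasoning inductively on the given typing derivation,'' and your detailed induction on the typing derivation (with the usual variable-renaming convention for the binder rules $(tab)$, $(tcnw)$, $(tqnw)$ and weakening in the $(tv)$ base case) is precisely the standard argument being invoked.
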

\begin{proof}
 Standard, by reasoning inductively on the given typing derivation.
\end{proof}
\noindent
Standard Lemmas of typed subject expansions can be straightforwardly proved too.


\begin{example}
Let $\mathtt{N}:\Nat$ be a term whose unique free variable is $\mathtt{r}:\qVar$. 
Also, let $\mathtt{Not}$ denote the not-gate (a.k.a. Pauli-X gate) and let 
$\mathtt{Id}$ denote the identity gate, both with arity $1$.
Let $\mathtt{M}$ denote the term $(\mathtt{r} \vartriangleleft ( (\mathtt{Not} 
\parallel \mathtt{Id}) \parallel \mathtt{Not})); \mathtt{N}$.
Then, $\mathtt{M}$ has type $\Nat$ by using $(tc)$ (just after an application of $(tC)$).
Moreover, by means of  $(tqnew)$, we can conclude
$\vdash \QNew{\mathtt{r}}{\num{3}}{\mathtt{M}}:\Nat$.
Anticipating the semantics, this means that the variable $\mathtt{r}$ in the 
sub-term $\mathtt{N}$ is associated to a quantum register of $3$ qubits 
initialized to $\ket{101}$.
\qed
\end{example}

\vspace{-0.7ex}
\vspace{-3ex}

\section{Evaluation Semantics } \label{sec:eval}

\vspace{-2ex}

\begin{table}
\scalebox{.91}{    $\begin{myArray}[1.8]{c}
    \infer[(en)]{\vettore{s},\num{n}\ev_{\PRB{1}} \tt \vettore{s},\num{n}}{} 
    \qquad \infer[(es)]{\vettore{s},\s(\mathtt{M})\ev_{\PRB{\alpha}} \vettore{s}',\num{n+1}}{ \vettore{s},\mathtt{M}\ev_{\PRB{\alpha}} \vettore{s}',\n}
    \qquad \infer[(ep)]{\vettore{s},\p(\mathtt{M})\ev_{\PRB{\alpha}} \vettore{s}',\n}{\vettore{s},\mathtt{M}\ev_{\PRB{\alpha}} \vettore{s}',\num{n+1}}
    \\[1.8mm]
 \infer[(e\beta)]{{\vettore{s},(\lambda \mathtt{x}.\mathtt{M})\mathtt{N}}\mathtt{P}_1\cdots \mathtt{P}_m\ev_{\PRB{\alpha}} \vettore{s}',\mathtt{V}}
    {\vettore{s},\mathtt{M}[\mathtt{N}/\mathtt{x}]\mathtt{P}_1\cdots \mathtt{P}_m\ev_{\PRB{\alpha}}  \vettore{s}',\mathtt{V}} 
    \qquad \infer[(e\TTY)]{\vettore{s}, \TTY \mathtt{M} \mathtt{P}_1\cdots \mathtt{P}_m\ev_{\PRB{\alpha}} \vettore{s}',\mathtt{V}}
    {\vettore{s}, \mathtt{M} (\TTY \mathtt{M}) \mathtt{P}_1\cdots \mathtt{P}_m\ev_{\PRB{\alpha}}  \vettore{s}',\mathtt{V}}
    \\[1.8mm]   
    \infer[(e\mathrm{if}_l)]{\vettore{s},\lif{M}{L}{R}\ev_{\PRB{\alpha\cdot\alpha'}} \vettore{s}'',\mathtt{V} }
    { \vettore{s},\mathtt{M}\ev_{\PRB{\alpha}} \vettore{s}',\num{0} &  \vettore{s}',\mathtt{L}\ev_{\PRB{\alpha'}} \vettore{s}'',\mathtt{V}}
    \qquad \infer[(e\mathrm{if}_r)]{ \vettore{s},\lif{M}{L}{R}\ev_{\PRB{\alpha\cdot\alpha'}} \vettore{s}'',\mathtt{V}}
    { \vettore{s},\mathtt{M} \ev_{\PRB{\alpha}} \vettore{s}',\num{n+1} & \vettore{s}',\mathtt{R}\ev_{\PRB{\alpha'}}  \vettore{s}'',\mathtt{V}}
    \\[-5mm]
    \hdashrule[0.5ex]{0.99\linewidth}{1pt}{1pt}
    \\[-.8mm]
    \infer[(esk)]{\vettore{s},\none\ev_{\PRB{1}} \tt \vettore{s},\none}{} 
   \qquad \infer[(e\textbf{;})]{\vettore{s},\mathtt{M};\mathtt{N} \ev_{\PRB{\alpha\cdot\alpha'}} \vettore{s}'',\mathtt{V}}
   {\vettore{s},\mathtt{M}\ev_{\PRB{\alpha}} \vettore{s}', \none \quad \vettore{s}',\mathtt{N}\ev_{\PRB{\alpha'}} \vettore{s}'',\mathtt{V}  }
   \qquad \infer[(eVar)]{\vettore{s},\mathtt{x} \ev_{\PRB{1}} \tt \vettore{s},\mathtt{x}}{} 
    \\[1.8mm]
    \infer[(ew1)]{\vettore{s},\while\mathtt{M} \wdo \mathtt{N} \ev_{\PRB{\alpha\cdot\alpha'\cdot\alpha''}} \vettore{s}''',\none}
    {\vettore{s},\mathtt{M}\ev_{\PRB{\alpha}} \vettore{s}', \num{n+1} \quad \vettore{s}',\mathtt{N}\ev_{\PRB{\alpha'}} \vettore{s}'',\none 
      \quad \vettore{s}'', \while\mathtt{M} \wdo \mathtt{N}\ev_{\PRB{\alpha''}} \vettore{s}''',\none}
     \\[1.8mm]
    \infer[(ew0)]{\vettore{s},\while\mathtt{M} \wdo \mathtt{N} \ev_{\PRB{\alpha}} \vettore{s}',\none}{\vettore{s},\mathtt{M}\ev_{\PRB{\alpha}} \vettore{s}', 0  }
    \qquad \infer[(ecA)]{\vettore{s},\mathtt{M} := {\tt N} \ev_{\PRB{\alpha\cdot\alpha'}} \vettore{s}''[ \mathtt{x}\leftarrow \num{n}],\none}
    { \vettore{s},\mathtt{N}\ev_{\PRB{\alpha}} \vettore{s}', \num{n} & \vettore{s}',\mathtt{M}\ev_{\PRB{\alpha}'} \vettore{s}'', \mathtt{x}} 
    \\[1.8mm]  
      \infer[ecR]{\vettore{s},\cRead\mathtt{M}\ev_{\PRB{\alpha}} \vettore{s}', \vettore{s}'(\mathtt{x}) }{ \vettore{s},\mathtt{M}\ev_{\PRB{\alpha}} \vettore{s}', \mathtt{x}}
      \qquad  \infer[ecN]{\vettore{s},\CNew{\mathtt{x}}{\mathtt{N}}{\mathtt{M}\ev_{\PRB{\alpha\cdot\alpha'}} \vettore{s}''\restr{\mathtt{x}},\mathtt{V} } }
      {  \vettore{s},\mathtt{N}\ev_{\PRB{\alpha}} \vettore{s}', \num{n}  & \vettore{s}'[\mathtt{x}\leftarrow \num{n}],\mathtt{M}\ev_{\PRB{\alpha'}} \vettore{s}'', \mathtt{V}}
      \\[-5mm]
      \hdashrule[0.5ex]{0.99\linewidth}{1pt}{1pt}  
      \\  
  \infer[(e\mathrm{u}_2)]{  \vettore{s},{\tt M}_0  \typecolon {\tt M}_1  \ev_{\PRB{\alpha\cdot\alpha'}}  \vettore{s}'',{\tt C}_0 \typecolon  {\tt C}_1 }
      { \vettore{s},{\tt M}_0\ev_{\PRB{\alpha}} \vettore{s}',{\tt C}_0   \quad \vettore{s}',{\tt M}_1\ev_{\PRB{\alpha'}}  \vettore{s}'',{\tt C}_1
          \qquad \scalebox{.8}{$\wires({\tt C}_0)$, $\wires({\tt C}_1)$ are defined, and $\wires({\tt C}_0)=\wires({\tt C}_1)$} }
        \\[3mm]
      \infer[(e\mathrm{u}_1)]{  \vettore{s}, {\tt U}^\num{k} \ev_{\PRB{1}} \vettore{s}, {\tt U}^\num{k}}{ }
    \quad  
        \infer[(e\mathrm{u}_3)]{\vettore{s},{\tt M}_0\parallel {\tt M}_1\, \ev_{\PRB{\alpha\cdot\alpha'}}    \vettore{s}'',{\tt C}_0 \parallel {\tt C}_1 }
        { \vettore{s},{\tt M}_0\ev_{\PRB{\alpha}} \vettore{s}',{\tt C}_0 \quad \vettore{s}',{\tt M}_1\ev_{\PRB{\alpha'}}  \vettore{s}'',{\tt C}_1  }
\quad  
  \infer[(e\mathrm{r_1})]{\vettore{s},\revCirc {\tt M} \ev_{\PRB{\alpha}} \vettore{s}',\mathtt{U}'}{ \vettore{s},{\tt M}\ev_{\PRB{\alpha}} \vettore{s}',{\tt U} & (\ddagger{\tt U})=\mathtt{U}' }
  \\[1.8mm]
    \infer[(e\mathrm{r_2})]{\vettore{s},\revCirc {\tt M} \ev_{\PRB{\alpha\cdot\alpha'\cdot\alpha''}} \vettore{s}''',{\tt C}'_1  \typecolon {\tt C}'_0 }
    { \vettore{s},{\tt M}\ev_{\PRB{\alpha}} \vettore{s}',{\tt C}_0 \typecolon {\tt C}_1 & \vettore{s}',\revCirc{\tt C}_0\ev_{\PRB{\alpha'}} \vettore{s}'',{\tt C}'_0 
      & \vettore{s}'',\revCirc{\tt C}_1\ev_{\PRB{\alpha''}} \vettore{s}''',{\tt C}'_1 }
    \\[1.8mm]
    \infer[\!\!(e\mathrm{r_3})]{\vettore{s},\revCirc {\tt M} \ev_{\PRB{\alpha\cdot\alpha'\cdot\alpha''}} \vettore{s}''',{\tt C}'_0  \parallel  {\tt C}'_1}
    { \vettore{s},{\tt M}\ev_{\PRB{\alpha}} \vettore{s}',{\tt C}_0 \parallel {\tt C}_1 & \vettore{s}',\revCirc{\tt C}_0\ev_{\PRB{\alpha'}} \vettore{s}'',{\tt C}'_0 
      & \vettore{s}'',\revCirc{\tt C}_1\ev_{\PRB{\alpha''}} \vettore{s}''',{\tt C}'_1 }
      \\[1.8mm]      
      \infer[(e\mathrm{csz})]{ \vettore{s}, \csize \mathtt{M} \ev_{\PRB{\alpha}} \vettore{s}, \wires(\mathtt{C}) }{ \vettore{s},\mathtt{M}\ev_{\PRB{\alpha}} \vettore{s}', \mathtt{C}  }
      \qquad
       \infer[(e\mathrm{rsz})]{ \vettore{s}, \rsize \mathtt{x} \ev_{\PRB{1}} \vettore{s},\vettore{s}_\sharp(\mathtt{x}) }{ }
    \\[1.8mm]
    \infer[(eqA_0)]{\vettore{s},\mathtt{M} \vartriangleleft {\tt N} \ev_{\PRB{\alpha\cdot\alpha'}} \vettore{s}''[ \mathtt{x}\xleftarrow{{\vettore{s}''}_\sharp(\mathtt{x})}\cEval^{\num{n}}\big( {\tt C}\big)\big( \vettore{s}'(\mathtt{x})\big)],\none}
    { \vettore{s},\mathtt{N}\ev_{\PRB{\alpha}} \vettore{s}', {\tt C}   &\vettore{s}',\mathtt{M}\ev_{\PRB{\alpha'}} \vettore{s}'', \mathtt{x} &  \wires({\tt C}) ={\vettore{s}}_\sharp(\mathtt{x})=\num{n} } 
    \quad
    \infer[(eqA_1)]{\vettore{s},\mathtt{M} \vartriangleleft {\tt N} \ev_{\PRB{\alpha\cdot\alpha'}} \vettore{s}'',\none}
    { \vettore{s},\mathtt{N}\ev_{\PRB{\alpha}} \vettore{s}', {\tt C}   &\vettore{s}',\mathtt{M}\ev_{\PRB{\alpha'}} \vettore{s}'', \mathtt{x} &  \wires({\tt C}) \neq{\vettore{s}}_\sharp(\mathtt{x})}
   \\[1.8mm]
 \infer[(eqM)]{\vettore{s},\qMeas^\mathtt{N} \mathtt{M} \ev_{\PRB{\alpha\cdot\alpha'\cdot\alpha''}} \vettore{s}''[ \mathtt{x}\leftarrow \ket{\phi} ] , \num{m} }
{
  \begin{array}{ll}
    \vettore{s},\mathtt{N}\ev_{\PRB{\alpha}} \vettore{s}', \num{k}   & {\vettore{s}}_\sharp(\mathtt{x})=\num{n} \\[-3.8mm]
    \vettore{s}',\mathtt{M}\ev_{\PRB{\alpha'}} \vettore{s}'',\mathtt{x} & (m,\ket{\phi}, \alpha'')\in\partMeas^{\num{n}}(\vettore{s}''(\mathtt{r}), k)
  \end{array}\\[-1.8mm]
}
    \quad
    \infer[(eqN)]{\vettore{s},\QNew{\mathtt{x}}{\mathtt{N}}{\mathtt{M}} \ev_{\PRB{\alpha\cdot\alpha'}} \vettore{s}''\restr{\mathtt{x}},\mathtt{V}}
                    {\vettore{s},\mathtt{N}\ev_{\PRB{\alpha}} \vettore{s}', \num{n} & \vettore{s}'\cup\{\mathtt{x}\xleftarrow{\num{n}}0\},\mathtt{M}\ev_{\PRB{\alpha'}} \vettore{s}'',\mathtt{V} }
                  \end{myArray}$
                  }
    \rule{\textwidth}{1pt}
    \caption{Operational Semantics.}
    \label{evaluationRules}
\end{table}

The evaluation of $\QuIA$ focuses on programs, i.e. closed terms whose type is ground.
Like Idealized Algol, the operational semantics of $ \QuIA $ uses an auxiliary function to record
 values associated to variables whose type is $\cVar$ or $\qVar$.
In addition, variables of type $\qVar$ must record the number of available qubits.

\begin{definition}[Stores]
A store $ \vettore{s}$ is the disjoint union of two partial functions, 
both defined on a finite domain. The first one has $\cVar$ as domain and 
numerals as co-domain. The second function has $\qVar$ as domain and pairs 
(quantum states, numerals) as co-domain, where the second component counts the 
qubits of the first component.
If $\mathtt{x}:\cVar$, then $ \vettore{s}(\mathtt{x}) $ denotes the numeral stored in $\mathtt{x}$.
If $\mathtt{x}:\qVar$, then $ \vettore{s}(\mathtt{x}) $ denotes 
the state stored in $\mathtt{x}$ and $ \vettore{s}_\sharp(\mathtt{x}) $ the number of qubits of the state $ \vettore{s}(\mathtt{x}) $.
The domain of definition of $\vettore{s}$ is denoted $\dom(\vettore{s})$, i.e. it is a set of variable names. 

We denote $\vettore{s}\restr{\mathtt{x}}$ the store that 
behaves like $\vettore{s}$ everywhere, except  on ${\tt x}$ where it is 
undefined.
Let $\vettore{s}$ be a store and let $\mathtt{x}:\cVar$;
then $\vettore{s}[{\tt x}\leftarrow\num{k}]$ denotes a store that behaves 
like $\vettore{s}$ everywhere except than on ${\tt x}$ to which it associates
$\num{k}$. Let $\vettore{s}$ be a store and let $\mathtt{x}:\qVar$;
then $\vettore{s}[{\tt x}\xleftarrow{\num{k}}\ket{\phi}]$ denotes a store that 
behaves like $\vettore{s}$ everywhere except than on ${\tt x}$ on which we have 
that $\vettore{s}(\mathtt{x}) = \ket{\phi}$ and 
$ \vettore{s}_\sharp(\mathtt{x}) = \num{k}$.
\qed
\end{definition}

We conventionally assume that $\mathtt{C}$ ranges over the strings that describe 
evaluated circuit expressions, i.e.~parallel and series composition of names 
for gates (cf. Theorem \ref{th:progress}).
Moreover, $\mathtt{V}$ ranges over numerals, strings that describe circuits,
names of registers and the command $\none$.

\begin{definition}[Operational Evaluation]
Let 
$ \mathtt{x}_1:\cVar \ldots, \mathtt{x}_n:\cVar, \mathtt{z}_1:\qVar \ldots,  \mathtt{z}_m:\qVar \vdash \mathtt{M}:\beta$ where $n,m \geq 0$ and 
$\beta\in\{\Nat,\CIRC,\cmd,\cVar,qVar\}$.
If $\vettore{s}$ is a store such that $\{\mathtt{x}_1, \ldots ,\mathtt{x}_n\}\subseteq  \cVar\cap \dom(\vettore{s})$ and
$\{\mathtt{z}_1, \ldots ,\mathtt{x}_n\}\subseteq \qVar\cap\dom(\vettore{s})$, 
then the evaluation semantics of $\QuIA$ proves formal statements
$\vettore{s},\mathtt{M} \evaluates_{\PRB{\alpha}}  \vettore{s}',\mathtt{V}$ 
that we obtain as conclusion of a (finite) derivation  $\mathcal{D}$ built with the rules in Table~\ref{evaluationRules}.
As expected, $\alpha\in(0,1]$ is the probability to  obtain $\mathcal{D}$.
\qed
\end{definition}

The rules $(en)$, $(es)$, $(ep)$, $(e\beta)$, $(e\TTY)$, $(e\mathrm{if}_l)$, 
$(e\mathrm{if}_r)$ at the top of Table~\ref{evaluationRules} are standard, that are used in the evaluation of $\PCF$ (e.g., see~\cite{paolini2006iandc}).
They are enriched by a store that can be 
eventually used in the evaluation of their sub-terms (involving side-effects).

The rules $(esk)$, $(e\textbf{;})$, $(eVar)$, $(ew1)$, $(ew0)$, $(ecA)$, $(ecR)$, $(ecN)$ in the middle of Table~\ref{evaluationRules} formalize the 
standard evaluation of first order references and of usual imperative instructions that we find in Idealized Algol. 
The evaluation $(ecA)$ begins by evaluating the expression whose result must be  stored.
The right-hand expression is expected to yield an classical variable.
The rule $(ecA)$ is the only one that changes the content of classical variable.
W.l.o.g., we assume that $\mathtt{x}\not\in \dom(\vettore{s})$ holds for the rule $(ecN)$. Otherwise, 
we had to replace $\vettore{s}''\restr{\mathtt{x}},\mathtt{V}$ with 
$\vettore{s}''[ \mathtt{x}\leftarrow\vettore{s}(\mathtt{x})],\mathtt{V}$
in its conclusion.
Finally, we observe that only the rule that changes the classical variables in the domain of the store is $(ecN)$.

\begin{example}
  \begin{itemize}
  \item Let $\mathtt{P}$ be the well-typed term  $B\cup\{\mathtt{x}:\cVar\}\vdash \ifz \, (\cRead\,\mathtt{x})\,
    \mathtt{M}_0 \,\mathtt{M}_1:\beta$ such that  $B\cup\{\mathtt{x}:\cVar\}\vdash\mathtt{M}_i:\beta$ ($i=0,1$).
    Let us evaluate $\mathtt{P}$ by using the store $\vettore{s}$: if $\vettore{s}(\mathtt{x})=0$ then we have to evaluate
    $\mathtt{M}_0$ (and we ignore $\mathtt{M}_1$), otherwise we  evaluate $\mathtt{M}_1$.
\item In the above term, let $\mathtt{M}_0=\mathtt{z}_0$, $\mathtt{M}_1=\mathtt{z}_1$, and  $\beta=\cVar$.
Starting with the store $\vettore{s}[\mathtt{x}\leftarrow 0]$,
there is a unique derivation describing the evaluation of $(\ifz \,\cRead(\mathtt{x})\, \mathtt{z}_0 \,\mathtt{z}_1): = 5$.
  This derivation concludes 
    $\vettore{s}[\mathtt{x}\leftarrow 0], (\ifz \,\cRead(\mathtt{x})\, \mathtt{z}_0 \,\mathtt{z}_1) : = 5\ev_{\PRB{1}} \vettore{s}[\mathtt{x}\leftarrow 0,\mathtt{z}_0\leftarrow 5],\none$.\qed
\end{itemize}

\end{example}

\paragraph{Extending Idealized Algol}

The remaining rules formalizes our original extension of $\QuIA$. Rules about circuit evaluations are inspired by similar operator of $\qpcf$ (see \cite{paolini2017lncs}).

It is worth to remark that some of these rules rest on some auxiliary definitions (cf. Definitions~\ref{def:wires},  \ref{def:unitSemantic} and~\ref{def:pmeas}).
Definition~\ref{def:wires} has been formalized only for simplicity reasons:
we isolated the function $ \wires $ that counts the number of wires of an evaluated quantum circuit (since it is used in many rules of  Table~\ref{evaluationRules}). 
Definitions~\ref{def:unitSemantic} and~\ref{def:pmeas} formalize the quantum co-processor as an external black-box.

The rules  $(e\mathrm{u}_1)$, $(e\mathrm{u}_2)$, $(e\mathrm{u}_3)$ are used to bring the evaluation of a circuit on subexpressions,
in order to reach (possible) side-effects (assignments and measurements) embedded in it.
The rule $(e\mathrm{u}_2)$ exploits the function $ \wires $ here below.
The rule $(e\mathrm{u}_2)$ applies only when the  $ \wires $ yields the same value on the two circuit components.
On the other hand, $\parallel$ is not subject to arity restriction (cf. rule $(e\mathrm{u}_3)$).

\begin{definition}\label{def:wires}
  $\wires$  is a partial function  from circuits to numerals. It is defined by cases as follows:
    \begin{itemize}
    \item $\wires(\mathtt{U}^\num{k})= \num{k}$;
    \item if $ \wires(\mathtt{M}_0 )$, $\wires(\mathtt{M}_1 )$ are defined and   $ \wires(\mathtt{M}_0 )= \wires(\mathtt{M}_1 )$ then
  $\wires(\mathtt{M}_0 \typecolon \mathtt{M}_1)= \wires(\mathtt{M}_0 )$;
  \item  if $ \wires(\mathtt{M}_0 )$, $\wires(\mathtt{M}_1 )$ are defined then $\;\wires(\mathtt{M}_0 \parallel \mathtt{M}_1)= \wires(\mathtt{M}_0 )+\wires(\mathtt{M}_1)$.
  \end{itemize}
The function is undefined in all other cases.
\qed
\end{definition}

It is easy to check that evaluated circuits (viz. circuits resulting from the evaluation of circuit expression)  are strings of the grammar $\mathtt{C} \myDef \mathtt{U}^\num{k} \mid \mathtt{C} \typecolon \mathtt{C} \mid \mathtt{C}\parallel  \mathtt{C} $  for which $\wires({\tt C})$ is defined.
Note that: (i) if $\mathtt{C}$ is $\mathtt{C}_0 \typecolon \mathtt{C}_1$, then $\wires({\tt C}_0)=\wires({\tt C}_1)$, cf. rule  $(e\mathrm{u}_2)$;
and, (ii) if $\tt M$ is a circuit expression such that $\wires(\mathtt{M})$ is undefined then its evaluation diverges.

We remark that the evaluation of circuits evolves rightward once the possible side-effects in sub-terms are concerned.
We mean that $(e\mathrm{u}_2)$ and $(e\mathrm{u}_3)$ update the store  $\vettore{s}'$ by first evaluating the side-effects in expression of the 
left-hand circuit and, then, by evaluating the side-effects in the expression of the right-hand circuit.

Recent quantum programming languages  \cite{green2013acm,paolini2017lncs,qwire,rios2017eptcs,Ross2015}
include the possibility to manipulate quantum circuits and, in particular, of reversing circuits as originally advocated by \cite{Knill96}.
$\revCirc$ is expected to produce the adjoint circuit of its input: it is implemented by rewiring gates in reverse order
(by means of rules $(e\mathrm{r_1})$, $(e\mathrm{r_2})$ and $(e\mathrm{r_3})$) and, then, by replacing each gate by its adjoint.
Its definition rests on the choice of a total endo-function (mapping each gate of arity $k$ to a gate of arity $k$) that we denote with the symbol $\ddagger$. 
As usual, we assume that $\QuIA$ is endowed with a universal set of gates and that $\ddagger$ gives back an adjoint gate for each gate.
If those latter assumptions do not hold, then $\ddagger$ can be chosen as the identity (so that, $\revCirc$ does not reverse anymore the corresponding quantum transformations,
but just rewires in reverse order).

The rule $(e\mathrm{csz})$ yields the arity of the  quantum (evaluated)  circuit.
The rule $(e\mathrm{rsz})$ yields the number of qubits that the quantum register it involves stores.

\subsubsection*{Quantum State Updates and Quantum Measurements}
We refer to \cite{NieCh10} as a standard and comprehensive reference about  quantum computation.
Here we just recall what we need to introduce the interaction with quantum co-processors.
The information of $n\in\mathbb{N}$ qubits is usually formalized by means of a normalized vector in $\mathcal{H}^n$
which is a Hilbert space of  dimension $N=2^n$ with orthonormal basis:\\[-5.8mm]
$$
\vspace{-2.1mm}\overbrace{\ket{\underbrace{0\ldots 0}_n},\ldots,\ket{\underbrace{1\ldots 1}_n}}^N
\enspace .
$$  
\noindent
The binary representation $x^\flat$ of any value $ x $ in the interval $ [0, \ldots, N-1] $ identifies a vector in $\mathcal{H}^n$.
The binary representation is handy to represent every state $\ket{\psi}$ of the Hilbert space 
as a linear combination:
$$\ket{\psi}=c_0 \ket{0^\flat}+ \ldots + c_{N-1}\ket{(N-1)^\flat}$$
with $c_0,\ldots,c_{N-1}\in\mathbb{C}$. 
Quantum transformations and measurement can transform quantum states~\cite{VVZ14,VVZ17,MVZ11}.
We recall that a measurement reduces a quantum state partially, or totally, to a classical state.
More precisely, given a state $\ket{\phi}\in\mathcal{H}^n$, we can measure a 
subset of qubits in $\ket{\phi}$ (i.e. a partial measurement). 
 The result of the measurement is a residual state vector with a given probability (cf. Definition \ref{def:pmeas}).

 Our co-processor is a black-box,
 so we can formalize the interaction of $\QuIA$ with the quantum co-processor in  an abstract way.
 We update the quantum state associated to a variable that can be updated by means of a function (see the next definition)
 that maps a circuit in the corresponding unitary transformation.
 
\begin{definition}\label{def:unitSemantic} 
Let $\text{Circ}^n$ be the set of evaluated circuits with type $\CIRC$, with  arity $\num{n}$ such that $N=2^{n}$.
Let $\mathcal{H}^n$ be a  Hilbert space of  finite dimension $N$. Let $\{\ket{\varphi_i}\}$ be an orthonormal basis on 
$\mathcal{H}^n$ and let $\mathcal{H}^n \rightarrow\mathcal{H}^n$ be the set of unitary operators on $\mathcal{H}^n$.
The map from evaluated circuits to their corresponding unitary transformations 
is $\cEval^n: \text{Circ}^n \rightarrow (\mathcal{H}^n 
\rightarrow\mathcal{H}^n)$, which we define as follows:
\begin{itemize}
\item $\cEval^n(\mathtt{U}^\num{n})\myDef\mathbf{U}$ where $\mathbf{U}: \mathcal{H}^n \rightarrow\mathcal{H}^n$ is the unitary transformation associated to the gate  $\mathtt{U}$;
\item  $\cEval^n({\tt C}_0 \typecolon {\tt C}_1)\myDef\cEval^n({\tt C}_1)\circ \cEval^n({\tt C}_0)$;
  \item  $\cEval^n({\tt C}_0 \parallel {\tt C}_1)\myDef\cEval^{n_0}({\tt  C}_0)\otimes \cEval^{n_1}({\tt C}_1)$ where $\wires({\tt C}_i)=\num{n}_i$  
  and $n=n_0+n_1$. \qed
\end{itemize}
\end{definition}

The rules $(eqA_0)$ and $(eqA_1)$ in Table~\ref{evaluationRules} evaluate two expressions.
If these evaluations converge then, the first one returns a circuit ${\tt C}$  and the second one returns a quantum variable ${\tt x}$ .
If $\wires({\tt C}) ={\vettore{s}}_\sharp(\mathtt{x})$
then the evaluation proceeds by rule $(eqA_0)$ that uses the function of Definition \ref{def:unitSemantic} to update the corresponding quantum state;
otherwise, the rule $(eqA_1)$ forgot the circuit transformations, in order to ensure the type-safety.

 
  $ \QuIA $ allows for \emph{partial measurements} of an arbitrary subset with $k$ qubits of a $ n $-qubits state.
    
\begin{definition}\label{def:pmeas}
  For  all $k,n\in\mathbb{N}$, let $\leftPart{n}{k}= k\pct (n+1)$ and $\rightPart{n}{k} = n-(k\pct (n+1))$, thus $\leftPart{n}{k}+\rightPart{n}{k}=n$ where $\pct$ denotes the  modulo arithmetic operation.
  If $x<2^j$  then, we use $\flat^j(x)$ to denote the binary representations of $x$  deployed on $j$ bits (i.e. binary digits).
  Moreover, let $S(j)= \{ \flat^j(x)  | 0\leq x <2^j\}$   and  $i\cdot j$ to denote the juxtaposition of $i$ and $j$.
  Following~\cite{KLM07},  we formalize
  $\partMeas^{n}: \mathcal{H}^{n}  \times \mathbb{N} \longrightarrow  \powerset\big( \mathbb{N} \times \mathcal{H}^{n}\times\mathbb{R} \big)$ 
   as follows:
    $$\partMeas\!^{n}(\ket{\phi}, k)= \left\{ 
    \big( m^\natural, \ket{\psi_m}, p_m \big) \left| 
    \begin{array}{l}
      \ket{\phi}=\sum_{i\in S(\leftPart{n}{k})}\sum_{j\in S(\rightPart{n}{k})}c_{i\cdot j} \ket{i}\otimes \ket{j} \text{ and,}\\[3mm]
    m\in S(\leftPart{n}{k}) \text{ s.t. }
    \ket{\psi_m}=\sum_{j \in S(\rightPart{n}{k})} \frac{c_{m \cdot j}}{\sqrt{p_m}} \ket{m}\otimes \ket{j}\\[1mm]
    \hspace{3cm}\text{where }  p_m=\sum_{j \in S(\rightPart{n}{k})} |c_{i\cdot j}|^2
    \end{array}
  \right\}\right.$$
\noindent
where $x^{\natural}$ is the natural number encoded in the sequence of bits $x$. 
The first argument of $ \partMeas\!^{n} $ is a quantum state $\ket{\phi}$ of  $\mathcal{H}^{n}$.  
The second argument  is the number of qubits we  want to measure, modulo $n+1$.  
The result of $ \partMeas^n(\ket{\phi},k) $ is a set of triples.
The first component of the triple is a partial measure executed on $\ket{\phi}$:
its value $m\in\mathbb{N}$ is the measurement of its first $k\pct (n+1)$ qubits. 
The second component  is the collapsing state (still in $\mathcal{H}^{n}$) and it is obtained from  $\ket{\phi}$  by collapsing its measured sub-state to $m$.
The third component is the probability of measuring the value $m$. \qed
\end{definition}
We remark that $\partMeas^{n}(\ket{\phi},0)$ measures $0$ qbit and $\partMeas^{n}(\ket{\phi},n)$ measures all $n$ qbits. 
\begin{example}
Let us consider the state 
$\ket{\phi}=\frac{1}{\sqrt{2}}\ket{0010} +\frac{1}{\sqrt{4}}\ket{1011}+\frac{1}{\sqrt{4}}\ket{0010}$
with 4-qubits in it. Let us  measure its two first quantum bits. 
We remark that $\partMeas^{n}(\ket{\phi}, k)=\partMeas^{n}(\ket{\phi}, n+k)$, 
so that $\partMeas^{4}(\ket{\phi}, 2)=\partMeas^{4}(\ket{\phi}, 6)$.
The partial observation of the two first qbits $\partMeas^4(\ket{\phi},2)$ 
has two possible outcomes. The first triple is 
 $(m_0,\sqrt{\frac{2}{3}}\ket{0010}+\sqrt{\frac{1}{3}}\ket{0001},\frac{3}{4})$
where $m_0^{\flat}=00$. The second one is $(m_1, \ket{1011},\frac{1}{4})$ where
$m_1^{\flat}=10$.
\qed \end{example}

The rule $(eqM)$ first evaluates two expressions in order to obtain a numeral $\num{k}$
and the quantum variable $\tt x$ subject of the measurement.
The numeral $\num{k}$ (modulo the number of qubits stored in  $\tt x$) identifies the number of qubits we want to measure.
Then, it measures the quantum state that the store associates to by using the abstract function of Definition \ref{def:pmeas} which describes the expected behavior of 
the co-processor and that works in accordance with quantum computations laws.
It is worth to note that the only non-deterministic rule of $\QuIA$ evaluation is $(eqM)$;
that is, the probability-label in the conclusion of programs that  never perform quantum measurement is $1$.
We here do not focus on any analysis about  the probabilistic behavior of $ \QuIA $.
We simply remark that $\QuIA$ is an extension of the probabilistic Idealized Algol in~\cite{danos2002acm} and of
the quantum language in~\cite{paolini2018arXivQPCF}. The probabilistic operational equivalence notions can be easily adapted to $\QuIA$.

W.l.o.g., we assume that $\mathtt{x}\not\in \dom(\vettore{s})$ in $(eqN)$. 
Otherwise
$\vettore{s}''\restr{\mathtt{x}},\mathtt{V}$  in its conclusion, should  be replaced with $\vettore{s}''[ \mathtt{x}\xleftarrow{\num{n}}\vettore{s}(\mathtt{x})],\mathtt{V}$ in its conclusion.
Both $(ecN)$ and $(eqN)$ are the only rules that 
change the domain of the store. A programmer can ask for a new quantum 
co-processor for manipulating a quantum state by means of $(eqN)$ at run-time. 
We notice that no limit exists on the number of quantum registers that a 
program in $ \QuIA $ manipulates.

\begin{example}[Bell state circuit]\label{ex:bell}
  We show how $ \QuIA $ can encode a circuit description and evaluation.
   
  The Bell states (or EPR states or EPR pairs) are the simplest examples of entanglement of quantum states~\cite{NieCh10}.
  The following circuit  applies a Hadamard gate on the top wire followed by a controlled-not:
      \begin{center}
        \scalebox{0.35}
        {
          \ifx\pdfoutput\undefined 
          \epsfbox{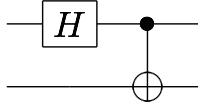} \else 
          \includegraphics{BEll} \fi
        }
      \end{center}
It can be used to generate the Bell states by  feeding it with a bases state 
$\ket{00}$, $\ket{01}$, $\ket{10}$, $\ket{11}$.    
For example, the circuit returns the state $\beta_{00}=\frac{1}{\sqrt{2}}(\ket{00}+\ket{11})$  on  input $\ket{00}$.

Let $H^{\num{1}}:\CIRC$ be the (unary) Hadamard gate, $\texttt{Id}^{\num{1}}:\CIRC$ be identity and  $\texttt{CNOT}^{\num{2}}:\CIRC$ be the controlled-not operator.
Let $\mathsf{Bell}$ be the closed term $ (\texttt{H}^{\num{1}} \parallel\! \texttt{Id}^{\num{1}})\typecolon \texttt{CNOT}^{\num{2}}$
that straightforwardly describes the above circuit. It is easy to check, by typing rules, that
$\vdash  (\texttt{H}^{\num{1}} \parallel \texttt{Id})\typecolon \texttt{CNOT}^{\num{2}}: \CIRC$.

We can simulate an  EPR experiment by using the (closed) term $\QNew {\tt r}{2} {  ( {\tt r}\vartriangleleft  \mathsf{Bell}; \qMeas^\texttt{\num{1}} {\tt r})}$:
it requires that a fresh co-processor (locally, named $r$) is made available for the computation of its body, i.e. ${\tt r}\vartriangleleft  \mathsf{Bell}; \qMeas^\texttt{\num{1}} {\tt r} $.
This latter applies the gates in $\mathsf{Bell}$ to the state stored in ${\tt r}$ and then performs a measurement.\\
Clearly, $\vdash \QNew{\tt r}{2}  {  ( {\tt r}\vartriangleleft  \mathsf{Bell}}; \qMeas\!^\texttt{\num{1}} {\tt r}) :\Nat$
and $\{\big( {\tt r}, \ket{00}\big)\}, {\tt r}\vartriangleleft \mathsf{Bell} \ev_{1}  \{\big({\tt r}, \frac{1}{\sqrt{2}}(\ket{00}+\ket{11})\big)\},\none$.
Moreover, since $\partMeas^2( \frac{1}{\sqrt{2}}(\ket{00}+\ket{11}) , 1)=\{(0,\ket{00},\frac{1}{2}), (1,\ket{11},\frac{1}{2})\}$,
either
$$\{\big({\tt r}, \frac{1}{\sqrt{2}}(\ket{00}+\ket{11}) \big)\}, \qMeas\!^\texttt{\num{1}}{\tt r}  \ev_{\frac{1}{2}}  \{\big({\tt r}, \ket{00}\big)\}, \num{0} \,\,
\mbox{ or } \,\, \{\big({\tt r}, \frac{1}{\sqrt{2}}(\ket{00}+\ket{11}) \big)\}, \qMeas\!^\texttt{\num{1}}{\tt r}  \ev_{\frac{1}{2}}  \{\big({\tt r}, \ket{11}\big)\}, \num{1}\,.$$\qed
 \end{example}

\vspace{-4ex}

\subsection{Type-safety} 

$\QuIA$ enjoys of usual properties of programming languages such as Preservation and  Progress~\cite{pierce2002mit}.
These results follow by adapting, quite straightforwardly, the standard techniques used for $\PCF$ and Idealized Algol.

\begin{theorem}[Preservation]
 Let   $ \mathtt{x}_1:\cVar \ldots, \mathtt{x}_n:\cVar, \mathtt{z}_1:\qVar 
 \ldots, \mathtt{z}_m:\qVar \vdash \mathtt{M}:\beta$ be a term.
 Let   $\vettore{s}$ be a store such that $\{\mathtt{x}_1, \ldots 
 ,\mathtt{x}_n\}\subseteq  \cVar\cap \dom(\vettore{s})$ and
 $\{\mathtt{z}_1, \ldots ,\mathtt{x}_n\}\subseteq \qVar\cap\dom(\vettore{s})$.\\
 If $\vettore{s},\mathtt{M} \evaluates_{\PRB{\alpha}} \vettore{s}',\mathtt{V}$ then
 $ \mathtt{x}_1:\cVar \ldots, \mathtt{x}_n:\cVar, \mathtt{z}_1:\qVar \ldots, \mathtt{z}_m:\qVar \vdash \mathtt{V}:\beta$. 
\end{theorem}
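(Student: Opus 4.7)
The plan is to argue by induction on the height of the derivation $\mathcal{D}$ of $\vettore{s},\mathtt{M}\evaluates_{\PRB{\alpha}}\vettore{s}',\mathtt{V}$, performing a case analysis on its last rule. For each case, I would invoke the generation lemma (inversion on the typing derivation of $\mathtt{M}$) to extract the types of the immediate subterms, then apply the induction hypothesis to the premises of the evaluation rule in order to classify the intermediate values, and finally assemble a typing derivation of $\mathtt{V}$ with the required type $\beta$. Throughout, I would tacitly appeal to the weakening property and to Proposition~\ref{substLemma} whenever evaluation introduces substitutions or enlarges the store's domain.

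The easy cases are those in which $\mathtt{V}$ is already a subterm of $\mathtt{M}$ whose type is immediate from inversion: $(en)$, $(esk)$, $(eVar)$, $(e\mathrm{u}_1)$, $(e\mathrm{rsz})$, as well as the purely propagative rules $(es)$, $(ep)$, $(e\mathrm{if}_l)$, $(e\mathrm{if}_r)$, $(e\textbf{;})$, $(ew0)$, $(ew1)$, $(ecR)$, $(e\mathrm{csz})$, $(e\mathrm{u}_2)$, $(e\mathrm{u}_3)$, $(e\mathrm{r_1})$--$(e\mathrm{r_3})$: in each of them, the shape of the conclusion is fixed (a numeral, $\none$, a circuit, or a variable) and matches exactly the ground type that inversion on $\mathtt{M}$ delivers. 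The circuit-building cases $(e\mathrm{u}_2)$, $(e\mathrm{u}_3)$, $(e\mathrm{r_2})$, $(e\mathrm{r_3})$ produce results in the grammar of evaluated circuits, whose type $\CIRC$ follows by reassembling typing derivations with $(tc2)$ and $(tc3)$.

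The substantive cases are $(e\beta)$, which requires Proposition~\ref{substLemma} to type $\mathtt{M}[\mathtt{N}/\mathtt{x}]\mathtt{P}_1\cdots\mathtt{P}_m$ at the type of the original redex before applying the induction hypothesis; and $(e\TTY)$, where from $\TTY_\sigma\mathtt{M}\mathtt{P}_1\cdots\mathtt{P}_m:\beta$ one uses $(tY)$ and $(tap)$ to derive $\mathtt{M}(\TTY_\sigma\mathtt{M})\mathtt{P}_1\cdots\mathtt{P}_m:\beta$. For the imperative/quantum binders $(ecN)$ and $(eqN)$, I would extend the induction hypothesis to the base $B\cup\{\mathtt{x}:\cVar\}$ (resp. $B\cup\{\mathtt{x}:\qVar\}$) to conclude that $\mathtt{V}$ has type $\beta$; here the side condition $\mathtt{x}\notin\mathrm{dom}(\vettore{s})$ makes the typing of $\mathtt{V}$ independent of the discarded entry, so the restriction $\vettore{s}''\restr{\mathtt{x}}$ is harmless. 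For the state-update rules $(ecA)$, $(eqA_0)$, $(eqA_1)$ the result is $\none:\cmd$, matching what $(tA)$ and $(tC)$ assign to the whole command. For $(eqM)$, inversion via $(tM)$ gives $\beta=\Nat$ and the conclusion $\num{m}$ is a numeral, typable by $(tn)$.

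The main obstacle, in my view, is bookkeeping rather than deep reasoning: one must verify that at every intermediate step the store satisfies the hypothesis of the induction, in particular that $\dom(\vettore{s}')$ still contains the free classical and quantum variables of the residual term, that the arity data in $\vettore{s}_\sharp$ is preserved by the rules that do not touch quantum registers, and that freshly bound names introduced by $(ecN)$ and $(eqN)$ do not leak into the type of the returned value. Once these invariants are checked once and for all, all remaining cases are mechanical applications of inversion, induction, and the typing rules of Table~\ref{TypingRules}.
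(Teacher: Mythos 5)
Your proposal is correct and follows essentially the same route as the paper: induction on the evaluation derivation with a case analysis on the last rule, invoking inversion of the typing rules, Proposition~\ref{substLemma} for the $(e\beta)$ case, and base extension for the binders $(ecN)$ and $(eqN)$. If anything, your write-up is more explicit than the paper's own proof sketch about which lemmas are used where and about the store invariants that must be maintained.
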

\begin{proof}
The proof is  by induction on the derivation concluding 
$\vettore{s},\mathtt{M} \evaluates_{\PRB{\alpha}} \vettore{s}',\mathtt{V}$.
The proof immediately holds on $(en)$, $(es)$, $(ep)$ while it is true on 
$(e\beta)$, $(e\TTY)$ by arguments related to the inductive hypothesis and the 
subject reduction. The inductive hypothesis straightforwardly applies to 
$(e\mathrm{if}_l)$, $(e\mathrm{if}_r)$, so we can skip to consider the 
imperative part of the language. If the last rule is one among $(esk)$, 
$(eVar)$,  $(ew1)$, $(ew0)$, $(ecA)$, $(ecR)$ the proof is once again 
immediate. If the last rule is  $(e\textbf{;})$, $(ecN)$ it is simple to apply 
the inductive argument. 
The rules $(e\mathrm{u}_1)$, $(e\mathrm{r}_1)$, $(e\mathrm{csz})$, 
$(e\mathrm{rsz})$, $(eqA_0)$, $(eqA_1)$, $(eqM)$ do not pose any specific 
difficulties. Finally, the inductive principle applies also to 
$(e\mathrm{u}_2)$, $(e\mathrm{u}_3)$, $(e\mathrm{r_2})$, $(e\mathrm{r_3})$ and 
$(eqN)$.
\end{proof}

\begin{theorem}[Progress]\label{th:progress}
Let   $ \mathtt{x}_1:\cVar \ldots, \mathtt{x}_n:\cVar, \mathtt{z}_1:\qVar 
\ldots, \mathtt{z}_m:\qVar \vdash \mathtt{M}:\beta$ be a term.
Let   $\vettore{s}$ be a store such that $\{\mathtt{x}_1, \ldots 
,\mathtt{x}_n\}\subseteq  \cVar\cap \dom(\vettore{s})$ and
$\{\mathtt{z}_1, \ldots ,\mathtt{x}_n\}\subseteq \qVar\cap\dom(\vettore{s})$.
\begin{enumerate}
  \item If $\beta=\Nat$ and  $\vettore{s},\mathtt{M} \evaluates_{\PRB{\alpha}} \vettore{s}',\mathtt{V}$ then $\mathtt{V}$ is a numeral.
  \item If $\beta=\CIRC$ and $\vettore{s},\mathtt{M} \evaluates_{\PRB{\alpha}} \vettore{s}',\mathtt{C}$ then $\mathtt{C}$ is a string of the grammar 
    $\;\mathtt{C} \myDef \mathtt{U}^\num{k} \mid \mathtt{C} \typecolon \mathtt{C} \mid \mathtt{C}\parallel  \mathtt{C} $
    such that $\wires({\tt C})$ is defined, and moreover if $\mathtt{C}$ has shape $\mathtt{C}_0 \typecolon \mathtt{C}_1$ then
$\wires({\tt C}_0)=\wires({\tt C}_1)$. 
  \item If $\beta=\cVar$ and $\vettore{s},\mathtt{M} \evaluates_{\PRB{\alpha}} \vettore{s}',\mathtt{V}$ then $\mathtt{V}$  is the name of a classical variable.
  \item  If $\beta=\qVar$ and $\vettore{s},\mathtt{M} \evaluates_{\PRB{\alpha}} \vettore{s}',\mathtt{V}$ then $\mathtt{V}$  is the name of a quantum variable.
    \item If $\beta=\cmd$ and $\vettore{s},\mathtt{M} \evaluates_{\PRB{\alpha}} \vettore{s}',\mathtt{V}$ then $\mathtt{V}$ is $\none$.
  \end{enumerate}
\end{theorem}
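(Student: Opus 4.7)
The plan is to proceed by induction on the derivation $\mathcal{D}$ of $\vettore{s},\mathtt{M} \ev_{\PRB{\alpha}} \vettore{s}',\mathtt{V}$, case-analyzing the last rule. For every rule, the value $\mathtt{V}$ appearing in the conclusion is either already in a canonical syntactic form fixed by the rule itself, or it is inherited from one of the inductive premises; in the first situation I will read off the shape directly and match it against the type $\beta$, invoking Preservation to rule out any incompatibility, and in the second situation I will appeal to the induction hypothesis on the premise that produced $\mathtt{V}$.

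Parts (1), (3), (4), and (5) reduce to routine bookkeeping. The rules that directly produce numerals in their conclusion, namely $(en)$, $(es)$, $(ep)$, $(ecR)$, $(e\mathrm{csz})$, $(e\mathrm{rsz})$, and $(eqM)$, settle part (1); the rule $(eVar)$ is the only one whose conclusion is a bare variable name, and its type in the base determines whether we are in case (3) or (4); the rules terminating at $\none$, namely $(esk)$, $(ecA)$, $(ew0)$, $(ew1)$, $(eqA_0)$, $(eqA_1)$, settle part (5). The structural rules $(e\beta)$, $(e\TTY)$, $(e\mathrm{if}_l)$, $(e\mathrm{if}_r)$, $(e\textbf{;})$, $(ecN)$, $(eqN)$ merely propagate a value from an inductive premise whose subject, by Preservation, is typed at $\beta$, so the induction hypothesis applies directly.

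Part (2) is where the real content lies. The rules producing circuit values are $(e\mathrm{u}_1)$, $(e\mathrm{u}_2)$, $(e\mathrm{u}_3)$, $(e\mathrm{r}_1)$, $(e\mathrm{r}_2)$, $(e\mathrm{r}_3)$, and for each I must verify three invariants in parallel: membership in the grammar $\mathtt{C} \myDef \mathtt{U}^\num{k} \mid \mathtt{C}\typecolon\mathtt{C} \mid \mathtt{C}\parallel\mathtt{C}$, definedness of $\wires(\mathtt{C})$, and, when $\mathtt{C}$ has the shape $\mathtt{C}_0\typecolon\mathtt{C}_1$, the arity match $\wires(\mathtt{C}_0)=\wires(\mathtt{C}_1)$. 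Rule $(e\mathrm{u}_1)$ is immediate; $(e\mathrm{u}_2)$ carries the matching-arity requirement as an explicit side-condition of the rule, which combined with the induction hypothesis on its two premises delivers all three invariants; $(e\mathrm{u}_3)$ follows from the additive clause of Definition~\ref{def:wires}. The reverse rules depend on the assumption that $\ddagger$ preserves arity, so $(e\mathrm{r}_1)$ returns a gate of the same width; in $(e\mathrm{r}_2)$ and $(e\mathrm{r}_3)$ one argues inductively that each reversed subcircuit has the same wire count as the original, whence the arity-match invariant is inherited by the reversed sequential composition.

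The main obstacle is keeping the three circuit-level invariants synchronized across the reverse rules, since $(e\mathrm{r}_2)$ swaps the two subcircuits and recursively applies $\revCirc$; I would first extract a small auxiliary lemma asserting that $\wires$ is invariant under $\revCirc$, and then plug it into the inductive step. All remaining cases are decided either by direct inspection of rule conclusions or by the induction hypothesis, with Preservation eliminating any residual mismatch between the syntactic shape of $\mathtt{V}$ and the ground type $\beta$.
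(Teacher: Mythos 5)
Your proposal is correct and follows essentially the same route as the paper: induction on the evaluation derivation with a case split on the last rule, the same partition of rules into those that fix the shape of $\mathtt{V}$ directly and those that propagate it from a premise, and typing (Preservation) to exclude the remaining rules. The only difference is that you spell out what the paper leaves implicit for the $\revCirc$ cases of part~(2) --- namely that one needs $\wires$ to be preserved by reversal (via the arity-preserving $\ddagger$) to recover the arity-match invariant for $\mathtt{C}'_1 \typecolon \mathtt{C}'_0$ in $(e\mathrm{r_2})$ --- which is a welcome refinement, not a divergence.
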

\begin{proof}
  All the proofs are by induction on the evaluation, proceeding by cases the 
  subject of the last rule applied.
  \begin{enumerate}
  \item The proof is standard for the $\PCF$ core of $\QuIA$.
    The proof is immediate for $(en)$, $(es)$, $(ep)$, $(ecR)$, 
    $(e\mathrm{csz})$, $(e\mathrm{rsz})$, $(eqM)$.
    The proof for $(e\beta)$, $(e\TTY)$, $(e\mathrm{if}_l)$, 
    $(e\mathrm{if}_r)$, $(e\textbf{;})$, $(ecN)$ and $(eqN)$ follows from the 
    inductive argument. The other cases are not possible because of the typing 
    rules.
  \item 
     The proof is immediate for $(e\mathrm{u}_1)$, $(e\mathrm{r}_1)$.
     The proof for $(e\beta)$, $(e\TTY)$, $(e\mathrm{if}_l)$, 
     $(e\mathrm{if}_r)$,  $(e\textbf{;})$, $(ecN)$,  $(e\mathrm{u}_2)$,  
     $(e\mathrm{u}_3)$, $(e\mathrm{r_2})$, $(e\mathrm{r_3})$, $(eqN)$ follows
     from the inductive argument.
     The other cases are not possible because of the typing rules.
   \item  
     The proof is immediate for $(eVar)$.
     The proof for $(e\beta)$, $(e\TTY)$, $(e\mathrm{if}_l)$, 
     $(e\mathrm{if}_r)$,  $(e\textbf{;})$, $(ecN)$, $(eqN)$  follows
     from the inductive argument.
     The other cases are not possible because the typing rules.
      \item  
     The proof is immediate for $(eVar)$.
     The proof for $(e\beta)$, $(e\TTY)$, $(e\mathrm{if}_l)$, 
     $(e\mathrm{if}_r)$,  $(e\textbf{;})$, $(ecN)$, $(eqN)$  follows
     from the inductive argument.
     The other cases are not possible because of the typing rules.
   \item The proof is immediate for $(esk)$,  $(ew0)$, $(ew1)$, $(ecA)$, $(eqA_0)$, $(eqA_1)$, 
     The proof for rules  $(e\beta)$, $(e\TTY)$, $(e\mathrm{if}_l)$, 
     $(e\mathrm{if}_r)$,  $(e\textbf{;})$, $(ecN)$, $(eqN)$  follows
     from the inductive argument.
     The other cases are not possible because of the typing rules.
  \end{enumerate}
\end{proof}

  \vspace{-7ex}

\section{Examples}\label{sec:example}

 \vspace{-1ex}
 
Two further examples of programming in $\QuIA$ follow. One implements 
Deutsch-Jozsa. The other one is (a subroutine of) Simon's algorithm.


\begin{table}[t]
     \centering 
     \begin{tabular}{c @{\qquad\qquad\qquad} c } 
     \vspace{0pt} \includegraphics[scale=0.8]{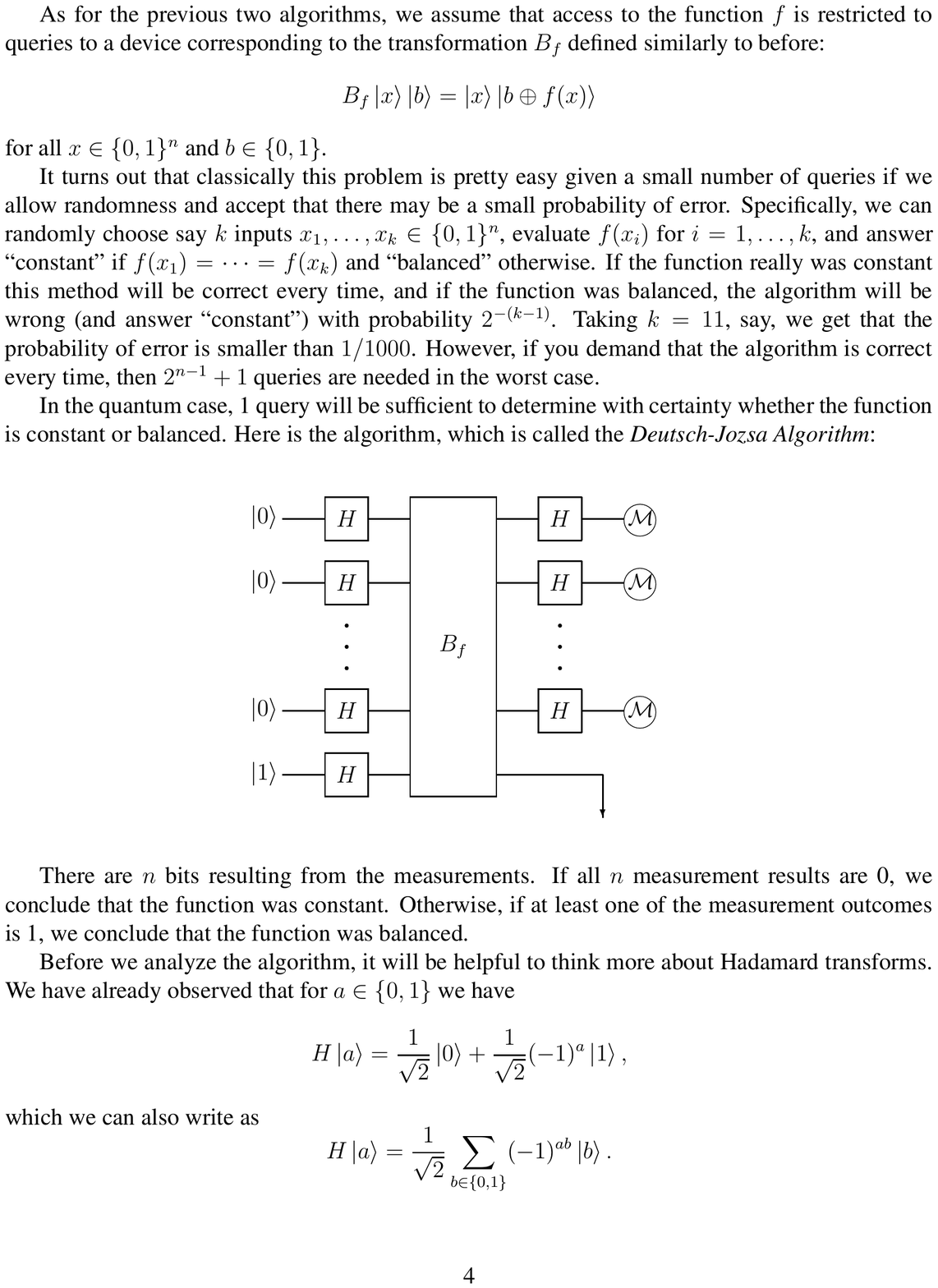}
     &  \vspace{0pt} \includegraphics[scale=0.35]{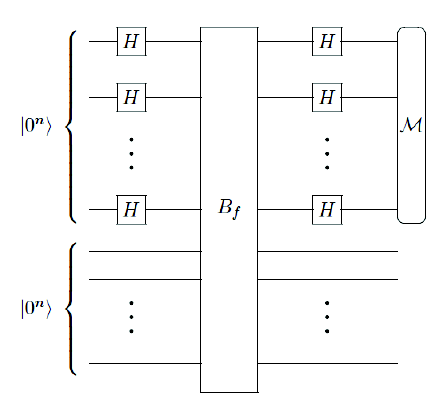}
                                                \\
\end{tabular}
\rule{\textwidth}{1pt}
     \caption{Deutsch-Jozsa circuit (left) and the quantum subroutine of th Simon's algorithm (right).}
     \label{tab:circuits}
   \end{table}

 \begin{example}[Deutsch-Jozsa Circuit]\label{ex:dj}
In this example we show how a \QuIA\ term can represent an infinite family of 
quantum programs which encode Deutsch-Jozsa algorithm~\cite{NieCh10}.
Deutsch-Jozsa is a generalization of Deutsch algorithm that,
given a black-box ${B}_f$ implementing a function $f : \{0,1\} \fleche 
\{0,1\}$, determines whether $f$ is constant or balanced\footnote{
A function is balanced if exactly half of the inputs goes to $0$ and, the other 
half, goes to $1$.} by means of a single call to ${B}_f$, something  impossible 
in the classical case that requires two calls.
\commento{
Deutsch's result plays a major role in quantum computing history, since, 
besides being it the first quantum algorithm ever, provides a separation  
between the complexity classes P and EQP (error-free quantum polytime).  }
Deutsch-Josza solves the parametrized version of the original problem because 
it applies to functions $f : \{0,1\}^{n} \fleche \{0,1\}$.  
The leftmost circuit in Table~\ref{tab:circuits} is a possible 
implementation of Deutsch-Josza in $ \QuIA $. 
When fed with a classical input state $\ket{\underbrace{0\ldots 0}_n1}$, the 
output can be partially measured. Measuring the first $n$ bits tells if the 
function $f$ is constant or not. 
If all $n$ qubits of such a unique measurement are 0, then $ f $ is constant. 
Otherwise, i.e., if at least one of the measurement outcomes is 1, then $ f $ 
is balanced. See~\cite{NieCh10} for further details. 

\medskip
Let $\mathtt{H}^{\;\num{1}}:\CIRC$ be the Hadamard gate and 
$\mathtt{Id}^{\;\num{1}}:\CIRC$ be the Identity gate.
We implement Deutsch-Jozsa in $ \QuIA $ by sequentially composing  $\mathsf{M}_1$,  $\mathtt{x}$ and $\mathsf{M}_3$,
where $\mathtt{x}:\CIRC$ is expected to be substituted by the black-box circuit  that implements the function $f$, while both $\mathsf{M}_1$ and $\mathsf{M}_3$ 
are defined in the coming lines.
\begin{itemize}
\item 
Let $\mpar$ be a term that applied to a circuit ${\tt C}:\CIRC$ and 
to a numeral $\num{n}$ puts $n+1$ copies of ${\tt C}$ in parallel. 
It is defined as
$\mpar=\lambda \mathtt{u}^{\CIRC}.\lambda \mathtt{k}^{\Nat}. Y\mathsf{W_1}\mathtt{u}\mathtt{k}:\CIRC\fleche \Nat\fleche\CIRC,$
where
$\mathsf{W_1}$ is the term $\lambda \mathtt{w}^\sigma.\lambda\mathtt{u}^{\CIRC}.\lambda \mathtt{k}^{\Nat}.  \lif{{k}}{(u)}{\left(  u\;\parallel\; (w\,u\,\Pred (\mathtt{k})) \right)}$
whose type is $\sigma\fleche\sigma$ with $\sigma=\CIRC\fleche\Nat\fleche\CIRC$.

\item 
The circuit $\mathsf{M}_1:\CIRC$ is obtained by feeding the term $\mpar$ with 
two inputs: the (unary) Hadamard gate $\mathtt{H}^{\;\num{1}}$ and the input 
dimension $\rsize{({\tt r})}$ where $r$ is a co-processor register with 
suitable dimension. It should be evident that it generates $n+1$ parallel 
copies of the gate $\mathtt{H}^{\;\num{1}}$.

\item The circuit $\mathsf{M}_3:\CIRC$ can be defined as $(\mpar 
\mathtt{H}^{\num{1}}\;  {\Pred{(\rsize{({\tt r}}))}}) 
\parallel\mathtt{Id}^{\num{1}}   : \CIRC$, i.e. it is obtained by the parallel 
composition of the term $\mpar$ fed by the gate $\mathtt{H}^{\;\num{1}}$ and 
the dimension ${\Pred{(\rsize{({\tt r}}))}}$ (generating $n$ parallel copies of 
the gate $\mathtt{H}^{\;\num{1}}$)  and a single copy $\mathtt{Id}^{\num{1}}$ 
of  the identity gate.
\end{itemize}

Fixed an arbitrary $ n $, the generalization of Deutsch-Jozsa is obtained by using  the quantum variable binder $\QNew {\tt r}{n}{P}$
that makes the quantum variable {\tt r} available in $P$.
In this picture, it is necessary to recall that 
the local variable declaration $\QNew {\tt r}{n}{P}$ creates a quantum 
register which is fully initialized to 0 (Section~\ref{sec:QIA}).
Since the expected input state of Deutsch-Jozsa is 
$\ket{\overbrace{0\ldots0}^{n}1}$, we define and use an initializing circuit
$\mathsf{M_{init}}=(\mpar{\mathtt{Id}^{\;\num{1}}\,}{(\Pred{(\rsize{(r)})})})\parallel \mathtt{Not}^{\;\num{1}}:\CIRC$ 
that complements the last qubit, setting it to $1$.
Let $\mathsf{DJ}^{+}$ be the circuit 
$\mathsf{M_{init}}\,\typecolon\,\mathsf{M}_1\,\typecolon\,\mathtt{x}\,\typecolon\,\mathsf{M}_3$.
The (parametric) $\QuIA$ encoding of Deutsch-Jozsa can be 
defined as
$\lambda \mathtt{x}^\CIRC. \QNew  {\mathtt{r}}  {\;n+1}{(({\tt r}\vartriangleleft\mathsf{DJ}^{+} );\qMeas^{\scalebox{.5}{\num{n}}} {\tt r} )}$. 
The program solves any instance of Deutsch-Jozsa fixed by the 
value of its dimension parameter $ n $ and by providing an encoding of the 
function $f$ to evaluate. 

\medskip

Let $\mathsf{M}_{B_f}$ be a black-box closed circuit  implementing the function $f$ that we want to check
and let  $\mathsf{DJ}^{\star}$ be $\mathsf{DJ}^{+} [ \mathsf{M}_{B_f} /\mathtt{x}]$ namely the circuit obtained by the substitution of $\mathsf{M}_{B_f}$ to $\mathtt{x}$ in $\mathsf{DJ}^{+}$.
By means of the evaluation rule $(EqA_{0})$, we have
$\{({\tt r}, \ket{\underbrace{0\ldots 0}_{n}})\}, {\tt 
r}\vartriangleleft\mathsf{DJ}^{\star} {}\ev_{1} \{({\tt r}, \ket{\phi})\}, 
\none$ where $\ket{\phi}$ is the computational state after the evaluation of 
$\mathsf{DJ}^{\star}$.
To (partially) measure the state $\ket{\phi}$ we use the rule $(eqM)$ to conclude 
$ \{{\tt r}, \ket{\phi}\}, \qMeas^\mathtt{\num{n}} {\tt r}\ev_{1}  \{{\tt r}, 
\ket{\phi'}\}, \num{k}$, where $(k,\ket{\phi'},1)\in 
\partMeas^n(\ket{\phi}, n)$, i.e. $\num{k}$ is the 
(deterministic) output of the measurement and $1$ is the associated probability.
\qed
\end{example}

\medskip

\begin{example}[Simon's algorithm]\label{ex:simon}
  In~\cite{simon94},  Simon exhibited a quantum algorithm that solves in  a polynomial time a problem for which the best known classical algorithm
takes exponential time~\cite{Arora09}.  Simon’s quantum algorithm is an important precursor to Shor’s 
Algorithm for integer factorization (both algorithms are both examples of the Hidden Subgroup Problem over Abelian groups).

We here focus on the the inherently quantum relevant fragment of Simon's algorithm~\cite{KLM07}. 

Simon's problem can be formulated as follows. Let be $f:\{0,1\}^{n}\fleche X$ ($X$ finite) a black-box function.
Determine the string $\mathsf{s}=s_1 s_2 \ldots s_n$  such that $f(x)=f(y)$ if and only if $x=y$ or $x=y\oplus \mathsf{s}$. 
 Simon's algorithm requires an intermediate, partial measure of the quantum  state.
 The measurement is embedded in a quantum subroutine that can be 
 eventually iterated at most $n$ times, where $n$ is the input size. 
 See~\cite{KLM07} for further details and a careful complexity analysis.

The rightmost circuit of Table~\ref{tab:circuits} implements the quantum  subroutine of Simon's algorithm
and has an encoding in $\QuIA$, due to the support of both partial measurement 
and persistent store of quantum measurements.

Simon's subroutine sequentially composes $\mathsf{M}_1$,  $\mathtt{x}$ and $\mathsf{M}_3$,
where $\mathtt{x}:{\CIRC}$ is expected to be substituted by the black-box 
circuit that implements the function $f$ (denoted as $B_{f}$ in 
Table\ref{tab:circuits}). $\mathsf{M}_1$ and $\mathsf{M}_3$ are defined by letting
$\mathsf{M}_1=\mathsf{M}_3=(\mpar (\mathtt{H}^{\;\num{1}}) \rsize{({\tt r})})\parallel (\mpar(\mathtt{Id}^{\;\num{1}})\rsize{({\tt r})}) \;:\CIRC$
where: (i)  $\mpar $ is the term defined in Example~\ref{ex:dj}, 
(ii) ${{\tt r}}$ is a quantum register; and, (iii)  $\mathtt{H}^{\;\num{1}}:\qCom$,
$\mathtt{Id}^{\;\num{1}}:\qCom$
are the unary Hadamard and Identity gates, respectively.

Let $\mathsf{M_{SP}}^{+}$ be the circuit $\mathsf{M}_1\typecolon x \typecolon 
\mathsf{M}_3:\CIRC\quad$. Let $n$ be the arity of $f$ we want to check.
The program that implements Simon's subroutine can be
$\lambda \mathtt{x}^\CIRC. \QNew  {\mathtt{r}}  {\;2*n}{(({\tt r}\vartriangleleft\mathsf{M_{SP}}^{+} );\qMeas^{\scalebox{.5}{\num{n}}} {\tt r} )}$,
where the abstracted variable $x:\CIRC$ will be replaced by a suitable encoding 
of the black-box function that implements $f$.


Let $\mathsf{M}_{B_f}\;:\CIRC$ be  the encoding of the circuit implementing $f$ 
and 
let $\mathsf{M_{SP}}^{\star}$ be $\mathsf{M_{SP}}^{+} [ \mathsf{M}_{B_f} 
/\mathtt{x}]$, namely the circuit obtained by the substitution of 
$\mathsf{M}_{B_f}$ for $\mathtt{x}$ in $\mathsf{M_{SP}}^{+}$.

It is easy to check that the following evaluation respects the $\QuIA$ semantics (rule $(EqA_{0})$):
$$\{({\tt r}, \ket{\underbrace{0\ldots 0}_{2*n}})\}, {\tt r}\vartriangleleft {\mathsf{M_{SP}}^{\star}}\ev_{1} \{({\tt r}, \ket{\phi})\}, \none \quad,$$ where $\ket{\phi}$ is the state after the evaluation of the circuit $\mathsf{M_{SP}}^{\star}$.
We can measure the first $\num{n}$ quantum bits as follows:
$ \{({\tt r}, \ket{\phi})\}, \qMeas^\mathtt{\num{n}} {\tt r}\ev_{\alpha}  
\{({\tt r}, \ket{\phi'})\}, \num{k}$, where $(k,\ket{\phi'},\alpha)\in 
\partMeas^{2*n}(\vettore{s}'(\mathtt{r}), n)$.

The classical output $\num{k}$ can be used as a feedback from the quantum co-processor by the classical program, 
in this way it can decide how to proceed in the computation. In particular, it can use the measurement as guard-condition in a loop 
that iterates the subroutine. 
So we can easily re-use the Simon-circuits above as many times as we want, by 
arbitrarily reducing the probability error.
\qed
\end{example}

\vspace{-1ex}
 \vspace{-2ex}

\section{Conclusions and future work}\label{sec:conclusions}
$\QuIA$ is a higher-order programming language that manages quantum 
co-processors. We formalize co-processors as quantum registers that store 
quantum states. This approach is radically new w.r.t. the existing proposals 
due to the following distinctive features:
(i) each quantum variable is associated to a unique quantum state,
we can duplicate such a name at will without invalidate the linear constraints that the quantum state has to satisfy; 
(ii) we formalize an elegant hiding mechanism that provides a natural approach to multiple co-processors internalized in the language;
and, (iii) the classical programming constructs included in $ \QuIA $ can be used naturally by a traditional programmer,
because they are unaffected by the generally quite restrictive requirements about the management of quantum data.
This approach introduces a neat separation between the description of the directives 
to manipulate states in quantum co-processors from the names for quantum states.
The reason is that directives 
are circuits that we consider as classical data that are freely duplicable and erasable.
Since the wide expressiveness of Idealized Algol is preserved in $\QuIA$, we think we are proposing a programming tool which 
represent a further step in the design of quantum programming languages, 
coherently with directions that Knill advocates in~\cite{Knill96}.


Current ongoing work focuses on semantics and typing systems of $ \QuIA $.
First, we plan to add dependent types for circuits and registers, in analogy 
to~\cite{paolini2017lncs,qwire}.
Second, we are studying a mature approach to the representation of quantum circuits by making explicit the linear management of their wires
(namely, a revised and restricted version of the circuits considered in~\cite{qwire}).
Third, we are interested in the formalization of a call-by-value version of 
$\QuIA$. The goal is to further ease the embedding 
of quantum programming in traditional programming frameworks.
Fourth, we are interested in developing a denotational semantics for $\QuIA$, 
maybe a not complete one, but suitable to tackle the equivalence between 
programs involving (meaningful) quantum, non-deterministic~\cite{AschieriZ13,AschieriZ16}, probabilistic and reversible~\cite{PaoliniPiccoloRoversi-ENTCS2016,paolini2018ngc}
aspects~\cite{DLZ12}.


\vspace{-3ex}

\bibliographystyle{eptcs}
\bibliography{biblio}

\end{document}